\documentclass[a4paper]{easychair}
\pdfoutput=1 
\usepackage{amsmath}
\usepackage{amsthm}
\usepackage{graphicx}
\usepackage{calc}
\usepackage{hyperref}
\usepackage{wrapfig}
\usepackage{multicol}
\usepackage{color}
\usepackage{import}

\newcommand{\mline}[1]{\begin{array}{c}#1\end{array}}
\newcommand{\outputs}{\mathsf{outputs}}
\newcommand{\FF}{{\cal F}}

\newcommand{\BB}{{\cal B}}
\newcommand\tuple[1]{\langle #1 \rangle}
\newcommand{\sset}[2]{\left\{~#1  \left|
      \begin{array}{l}#2\end{array}
    \right.     \right\}}

\newcommand{\algorithm}[1]{\mbox{\sc\tt{#1}}}

\newcommand{\cinfty}{\multicolumn{1}{c||}{$\infty$}}

\newtheorem{definition}{Definition}
\newtheorem{lemma}{Lemma}

\pagestyle{plain}

\begin{document}

\title{Twenty-Five Comparators is Optimal\\when Sorting Nine Inputs
  (and Twenty-Nine for Ten)\thanks{Supported by the Israel Science
    Foundation, grant 182/13 and by the Danish Council for Independent
    Research, Natural Sciences. Computational resources provided by an
    IBM Shared University Award (BGU).}}
\author{Michael Codish\inst{1} \and Lu\'{i}s Cruz-Filipe\inst{2} \and Michael Frank\inst{1} \and Peter Schneider-Kamp\inst{2}}
\institute{Department of Computer Science, Ben-Gurion University of the Negev, Israel\\
\texttt{\{\href{mailto:mcodish@cs.bgu.ac.il}{mcodish},\href{mailto:frankm@cs.bgu.ac.il}{frankm}\}@cs.bgu.ac.il} \\[1ex]\and Department of Mathematics and Computer Science, University of Southern Denmark, Denmark\\
\texttt{\{\href{mailto:lcf@imada.sdu.dk}{lcf},\href{mailto:petersk@imada.sdu.dk}{petersk}\}@imada.sdu.dk}}

\maketitle

\begin{abstract}
  This paper describes a computer-assisted non-existence proof of
  9-input sorting networks consisting of 24 comparators, hence
  showing that the 25-comparator sorting network found by Floyd in
  1964 is optimal. As a corollary, we obtain that the 29-comparator
  network found by Waksman in 1969 is optimal when sorting 10 inputs.
  This closes the two smallest open instances of the optimal-size sorting
  network problem, which have been open since the results of Floyd and
  Knuth from 1966 proving optimality for sorting networks of
  up to 8 inputs.
  The proof involves a combination of two methodologies: one based on
  exploiting the abundance of symmetries in sorting networks, and the
  other based on an encoding of the problem to that of satisfiability
  of propositional logic. We illustrate that, while each of these can
  single-handedly solve smaller instances of the problem, it is their
  combination that leads to the more efficient solution that scales to
  handle 9 inputs.
\end{abstract}

\section{Introduction}

General-purpose sorting algorithms are based on comparing, and
possibly exchanging, pairs of inputs. If the order of these
comparisons is predetermined by the number of inputs to sort and does
not depend on their concrete values, then the algorithm is said to be
data-oblivious.  Such algorithms are well suited for e.g.\ parallel
sorting or secure multi-party computations.
  
Sorting networks are a classical formal model for such algorithms
\cite{Knuth73}, where $n$ inputs are fed into networks of $n$
channels, which are connected pairwise by comparators.  Each
comparator takes the two inputs from its two channels, compares them,
and outputs them sorted back to the same two channels. Consecutive
comparators can be viewed as a ``parallel layer'' if no two touch
the same channel.  A comparator network is a sorting network if the
output on the $n$ channels is always the sorted sequence of the
inputs.

Ever since sorting networks were introduced, there has been a quest to
find optimal sorting networks for 
specific given numbers of inputs: optimal size (minimal number of
comparators) as well as optimal depth (minimal number of layers)
networks.  In this paper we focus on optimal-size sorting networks.

Optimal-size and optimal-depth sorting networks for $n \leq 8$ can
already be found in Section 5.3.4 of \cite{Knuth73}.  For optimal
depth, in 1991 Parberry~\cite{DBLP:journals/mst/Parberry91} proved
optimality results for $n = 9$ and $n = 10$, which in 2014 were
extended by Bundala and Z{\'a}vodn{\'y}~\cite{DBLP:conf/lata/BundalaZ14}
to $11 \leq n \leq 16$.
Both approaches are based on breaking symmetries among the first (two)
layers of comparators.

For optimal size, the case of $n=9$ has been the smallest open problem
ever since Floyd and Knuth's result for optimal-size sorting networks
\cite{Knuth66} in 1966. At first, this might be surprising: is optimal size
really harder than optimal depth? However, a
comparison of the sizes of the search spaces for the optimal-size and
optimal-depth problems for $n = 9$ sheds some light on the issues. The
smallest known sorting network for $9$ inputs has size $25$. For
proving/disproving its optimality, we need to consider all comparator
networks of 24 comparators. There are $36 = (9\times 8)/2$
possibilities to place each comparator on $2$ out of $9$ channels.
Thus, the search space for the
optimal-size problem on $9$ inputs consists of $\left(36\right)^{24}
\approx 2.2 \times 10^{37}$ comparator networks.
  
In comparison, to show that the optimal-depth sorting network for $9$
inputs is $7$, one must show that there are no sorting networks of
depth $6$. The number of ways to place comparators in an $n$ channel
layer corresponds to the number of matchings in a complete graph with
$n$ nodes~\cite{DBLP:conf/lata/BundalaZ14}, and for $n=9$ this number
is $2{,}620$.
Thus, the search space for the optimal-depth problem on $9$ inputs is
``just'' $2{,}620^6 \approx 3.2 \times 10^{20}$. In addition, the
layering allows for some beautiful symmetry breaking
\cite{DBLP:conf/lata/BundalaZ14,ourarxivstuff} on the first two
layers, reducing the search space further to approx.\ $10^{15}$
comparator networks.

For the optimal-depth problem, all recent attempts
we are aware of~\cite{DBLP:conf/mbmv/MorgensternS11,DBLP:conf/lata/BundalaZ14} have
used encodings to the satisfiability problem of propositional logic
(SAT). Likewise, in this paper we describe a SAT encoding for the
optimal-size problem. This SAT encoding was able to reproduce all
known results for $n \leq 6$.  Unfortunately, the SAT encoding alone
did not scale to $n = 9$, with state-of-the-art SAT solvers making no
discernible progress even after weeks of operation.
  
To solve the open problem of optimality for $n = 9$, we had to invent
symmetry breaking techniques for reducing the search space to a
manageable size. The general idea is similar to the one taken in
\cite{DBLP:conf/lata/BundalaZ14,ourarxivstuff} 
for the optimal-depth sorting network problem, but involves the
generation of minimal sets of non-redundant comparator networks for a
given number of comparators, one comparator at a time. Redundant
networks (i.e., networks that sort less than others of same size or
that are equivalent to another network already in the set) are
pruned. For each pruned network, a 
witness is logged, which can be independently verified.
  
For $n = 9$, we used this method, which we call generate-and-prune, to
reduce the search space from approx.\ $2.2 \times 10^{37}$
to approx.\ $3.3 \times 10^{21}$ comparator
networks, all of which can be obtained by extending one of $914{,}444$
representative 14-comparator networks.  This process took a little
over one week 
of computation, and all of the resulting problems could be handled
efficiently by our SAT encoding in less than $12$ hours (in
total). All computations, if not specified otherwise, were performed
on a cluster with a total of 144 Intel E8400 cores clocked at 3 GHz
each and able to run a total of $288$ threads in parallel.
  
The generate-and-prune method can also be used in isolation to decide
this open problem: amongst the set of all comparator networks (modulo
equivalence and non-redundancy) there is only one single sorting
network, and it is of size 25.  To obtain this result, we continued
running the generate-and-prune method for five more days in order to
check the validity of the results obtained through the SAT encoding
independently, thereby instilling a higher level of trust into the
computer-assisted proof.  This paper presents 
both techniques: the first one based completely on the generate-and-prune
approach, and the second, hybrid, method combining generate-and-prune
with SAT encoding. It is the second approach that solves the
nine-input case in the least amount of time, and also shows the potential to
scale.
  
Once determining that 25 comparators 
is optimal for 9 inputs, we move on to consider the case of $10$
inputs.  Using a result of van Voorhis from 1971~\cite{voorhis72}, we
know that the minimal number of comparators for sorting 10 inputs is
at least $4$ larger than for 9 inputs.  As a sorting network with
$29$ comparators on ten inputs (attributed to Waksman) is known since
1969~\cite{Knuth73}, our result implies its optimality.
  
The next section introduces the relevant concepts on sorting networks
together with some notation.  The generate-and-prune algorithm is
introduced in Section~\ref{sec:step1}, while its optimization and
parallelization are discussed in detail in
Section~\ref{sec:implement}.  The SAT encoding is explained and
analyzed in Section~\ref{sec:step2}.  In Section~\ref{sec:discussion}
we reflect on the validity of the proof, and we conclude in
Section~\ref{sec:concl}.
 
\section{Preliminaries on sorting networks}
\label{sec:prelim}

A \emph{comparator network} $C$ with $n$ channels and size $k$ is a
sequence of \emph{comparators} $C = (i_1,j_1);\ldots;(i_k,j_k)$ where
each comparator $(i_\ell,j_\ell)$ is a pair of channels $1\leq i_\ell
< j_\ell\leq n$. The \emph{size} of a comparator network is the number
of its comparators. If $C_1$ and $C_2$ are comparator networks, then
$C_1;C_2$ denotes the comparator network obtained by concatenating
$C_1$ and $C_2$; if $C_1$ has $m$ comparators, it is a \emph{size-$m$
  prefix} of $C_1;C_2$.
An input $\vec x=x_1\ldots x_n\in\{0,1\}^n$ propagates through $C$ as
follows: $\vec x^0 = \vec x$, and for $0<\ell\leq k$, $\vec x^\ell$ is
the permutation of $\vec x^{\ell-1}$ obtained by interchanging $\vec
x^{\ell-1}_{i_\ell}$ and $\vec x^{\ell-1}_{j_\ell}$ whenever $\vec
x^{\ell-1}_{i_\ell}>\vec x^{\ell-1}_{j_\ell}$.  The output of the
network for input $\vec x$ is $C(\vec x)=\vec x^k$, and
$\outputs(C)=\sset{C(\vec x)}{\vec x\in\{0,1\}^n}$.  The comparator
network $C$ is a \emph{sorting network} if all elements of
$\outputs(C)$ are sorted (in ascending order).
\begin{wrapfigure}[7]{r}{0.25\textwidth}
  \vspace*{-0.5ex}
  \quad$(a)$~\raisebox{-\height/2}{\includegraphics{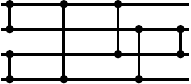}}

  \vspace{1em}

  \quad$(b)$~\raisebox{-\height/2}{\includegraphics{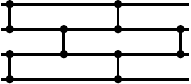}}
\end{wrapfigure}
The zero-one principle~(e.g.~\cite{Knuth73}) implies that a sorting
network also sorts 
sequences over any other totally ordered set, e.g.~integers.
Images~$(a)$ and~$(b)$ on the right depict sorting networks on 4
channels, each consisting of 6 comparators.  The channels are
indicated as horizontal lines (with channel $4$ at the bottom),
comparators are indicated as vertical lines connecting a pair of
channels, and input values are assumed to propagate from left to
right. The sequence of comparators associated with a picture
representation is obtained by a left-to-right, top-down
traversal. For example the networks depicted above are: $(a)$
$(1,2); (3,4); (1,4); (1,3); (2,4); (2,3)$ 
and $(b)$~$(1,2); (3,4); (2,3); (1,2);(3,4); (2,3)$.

The optimal-size sorting network problem is about finding the smallest
size, $S(n)$, of a sorting network on $n$ channels.  In~\cite{Knuth66},
Floyd and Knuth present sorting networks of optimal size for $n\leq 8$
and prove their optimality. Until today, the minimal size $S(n)$ of a
sorting network on $n$~channels
was
known only for $n\leq 8$; for
greater values of $n$, there are upper bounds on $S(n)$ obtained e.g.\
by the systematic construction of
Batcher~\cite{DBLP:conf/afips/Batcher68}, or by concrete examples of
sorting networks (see~\cite{Knuth73}).  The previously best known upper and lower bounds
for $S(n)$ are given in~\cite{Knuth66} and reproduced in the first two lines of the table below.
The last line shows the contribution of this paper, i.e., the improved lower bounds, matching the upper bounds for $n = 9$ and $n = 10$.

{\small
\[
\begin{array}{l|c|c|c|c|c|c|c|c|c|c|c|c|c|c|c|c}
\!\!\!n   & 1 & 2 & 3 & 4 & 5 & 6  & 7  & 8  & 9  & 10 & 11 & 12 & 13 & 14 & 15 & 16\!\!\! 
\\ \hline
\!\!\!\mbox{upper bound} & 0 & 1 & 3 & 5 & 9 & 12 & 16 & 19 & 25 & 29 & 35 & 39 & 45 & 51 & 56 & 60\!\!\!\\
\!\!\!\mbox{old lower bound}\,&\,0\,&\,1\,&\,3\,&\,5\,&\,9\,&\,12\,&\,16\,&\,19\,&\,23\,&\,27\,&\,31\,&\,35\,&\,39\,&\,43\,&\,47\,&\,51\!\!\!\\
\!\!\!\mbox{new lower bound}\!&\,\,&\,\,&\,\,&\,\,&\,\,&\,\,&\,\,&\,\,&\,25\,&\,29\,&\,33\,&\,37\,&\,41\,&\,45\,&\,49\,&\,53\!\!\!
\end{array}
\]}

The following lemma due to van Voorhis \cite{voorhis72} can be used to
establish lower bounds for $S(n)$.
\begin{lemma}
\label{lem:vanvoorhis}
$S(n+1) \geq S(n) + \lceil\log_2n\rceil$ for every $n \geq 1$.
\end{lemma}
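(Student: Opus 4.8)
The plan is to prove the equivalent inequality $S(n) \le S(n+1) - \lceil \log_2 n\rceil$ by taking an arbitrary sorting network $C$ on $n+1$ channels and surgically removing a single channel so that many comparators disappear. The tool is the zero-one principle together with the behaviour of a maximal value. Fix a channel $c$ and feed on channel $c$ a value $\top$ larger than all the others (the remaining channels receiving arbitrary inputs). Since every comparator $(i,j)$ routes the larger of its two inputs to the higher-indexed channel $j$, the marker $\top$ never moves to a smaller channel: it stays put whenever it sits on the high side of a comparator and jumps to the higher channel whenever it sits on the low side, tracing a monotone ``wire'' from $c$ up to channel $n+1$. First I would verify that every comparator touching this wire is redundant for the remaining $n$ values: when $\top$ stays, the comparator leaves the partner value unchanged; when $\top$ jumps, the comparator merely transports the partner value along the wire without comparing it to anything real. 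Deleting the wire and all comparators incident to it therefore yields a comparator network $C'$ on $n$ channels, and because $C$ sorts all $n+1$ inputs (sending $\top$ to channel $n+1$) while the deleted comparators never compare two genuine values, $C'$ sorts its $n$ inputs. Writing $t_c$ for the number of comparators incident to the $\top$-wire, this gives $S(n) \le |C'| = S(n+1) - t_c$ for every choice of $c$, hence $S(n) \le S(n+1) - \max_c t_c$.

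It then remains to show that some channel $c$ yields $t_c \ge \lceil \log_2 n\rceil$. I would prove this by running all $n+1$ experiments at once and scanning the comparators of $C$ from left to right. For each channel $q$ maintain the set $A_q$ of start-channels $c$ whose marker currently rests on $q$; initially $A_q = \{q\}$, and a comparator $(i,j)$ (with $i<j$) updates $A_j \leftarrow A_i \cup A_j$ and $A_i \leftarrow \emptyset$, charging one ``touch'' to every marker then sitting on channel $i$ or on channel $j$. The crucial point is that a comparator increments the touch-count of the markers on \emph{both} of its channels. I would track $\nu_q$, the largest touch-count among the markers on channel $q$, and establish by induction the invariant $\nu_q \ge \log_2 |A_q|$: starting from $\nu_q = 0 = \log_2 1$, after processing $(i,j)$ the new value is $\nu_j = \max(\nu_i,\nu_j) + 1 \ge 1 + \log_2 \max(|A_i|,|A_j|) = \log_2\!\big(2\max(|A_i|,|A_j|)\big) \ge \log_2(|A_i| + |A_j|)$, which is exactly $\log_2$ of the new $|A_j|$. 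At the end every marker has reached channel $n+1$, so $A_{n+1}$ has $n+1$ elements and $\max_c t_c = \nu_{n+1} \ge \log_2(n+1)$; since touch-counts are integers this gives $\max_c t_c \ge \lceil \log_2(n+1)\rceil \ge \lceil \log_2 n\rceil$. Combining with the previous paragraph yields $S(n) \le S(n+1) - \lceil \log_2 n\rceil$, as required. (In fact the argument delivers the slightly stronger constant $\lceil \log_2(n+1)\rceil$, which recovers exactly the old lower-bound row of the table from the base value $S(8)=19$.)

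The routine parts are the zero-one reasoning and the induction; the step that needs genuine care, and which I expect to be the main obstacle, is the ``surgery'' in the first paragraph. One must check that after contracting away the $\top$-wire the retained comparators still form a legitimate comparator network on a cleanly relabelled set of $n$ channels --- in particular that the two channels of each retained comparator keep a well-defined order as the wire weaves between channels --- and that this relabelling neither identifies two retained comparators nor changes which real values they compare. Once this bookkeeping is pinned down, the counting invariant closes the proof.
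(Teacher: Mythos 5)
The paper does not prove this lemma at all: it is imported from van Voorhis \cite{voorhis72} (it is also Exercise~5.3.4.43 of \cite{Knuth73}), so your attempt can only be judged on its own merits --- and what you have reconstructed is essentially the classical argument. Your counting half is fully correct: the sets $A_q$ partition the $n+1$ experiments (each marker sits on exactly one channel), so $A_i$ and $A_j$ are disjoint and the update gives $|A_i\cup A_j|=|A_i|+|A_j|$; the invariant $\nu_q\geq\log_2|A_q|$ then survives every comparator, including the degenerate cases where one or both of $A_i,A_j$ are empty, and integrality yields $\max_c t_c\geq\lceil\log_2(n+1)\rceil$. This is in fact van Voorhis's original, slightly stronger constant (it does recover the old lower-bound row, as you note), and the lemma as stated follows a fortiori since $\lceil\log_2 n\rceil\leq\lceil\log_2(n+1)\rceil$.

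The soft spot is exactly where you predicted it, and one of the checks you propose there would fail as stated: after deleting the $\top$-wire, the two channels of a retained comparator do \emph{not} in general keep a well-defined order. When $\top$ jumps from channel $i$ to channel $j$ via the comparator $(i,j)$, the displaced real value travels from $j$ down to $i$, crossing every surviving channel strictly between them; relative to the remaining values this is a nontrivial transposition of wires, not an order-preserving relabelling. Concretely, on three channels with the wire entering at channel~$1$, after the comparator $(1,3)$ the value that entered on channel~$3$ lies \emph{below} the value that entered on channel~$2$, so a subsequent comparator $(1,2)$ sends its minimum to the channel whose input entered higher --- an inverted comparator in any consistent labelling. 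So the contraction yields a \emph{generalized} comparator network, not a standard one. The repair is standard and already present in this paper: the deleted wire corresponds to one abstract channel untouched by any retained comparator, the remaining $n$ abstract channels carry a generalized network that sorts every input (permuting the \emph{input} channels is harmless, since one quantifies over all inputs), and a generalized sorting network can be untangled into a standard sorting network of the same size by Exercise~5.3.4.16 of \cite{Knuth73}, the same device invoked in the proof of Lemma~\ref{lem:outputs}. Replace your hoped-for ``channels keep their order'' bookkeeping by this untangling step and the surgery paragraph closes, making the whole proof correct.
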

This lemma was applied in \cite{Knuth66} to derive the values of
$S(6)$ and $S(8)$ from those of $S(5)$ and $S(7)$,
respectively. Likewise, we apply Lemma~\ref{lem:vanvoorhis} to obtain
the value of $S(10)$ from our proof that $S(9) = 25$ and,
consequently, we are able to improve the values for $S(n)$ for $n>10$,
as indicated in the third line of the 
above table.

Crucial to our approach is the exploitation of symmetries in
comparator networks, and these can be expressed in terms of
permutations on channels.
Given an $n$ channel comparator network
$C=(i_1,j_1);\ldots;(i_k,j_k)$, and a permutation $\pi$ on
$\{1,\ldots,n\}$, $\pi(C)$ is the sequence
$(\pi(i_1),\pi(j_1));\ldots;(\pi(i_k),\pi(j_k))$.  Formally, $\pi(C)$
is not a comparator network, but rather a generalized comparator
network.
A \emph{generalized comparator network} is defined like a comparator
network, except that it may contain comparators $(i,j)$ with $i>j$,
which order their outputs in descending order, instead of ascending.
It is well-known~(e.g.~Exercise~5.3.4.16 in \cite{Knuth73})
that generalized sorting networks are no more powerful than sorting
networks: a generalized sorting network can always be \emph{untangled}
into a (standard) sorting network with the same size and depth.

We write $C_1 \approx C_2$ ($C_1$ is equivalent to $C_2$) iff there is a
permutation $\pi$ such that $C_1$ is obtained by untangling the (generalized)
comparator network $\pi(C_2)$.
The two networks~$(a)$ and~$(b)$ above are equivalent via the permutation
$(1\,3)(2\,4)$ and the application of the construction for untangling
described in~\cite{Knuth73} (Exercise~5.3.4.16).

Another important and related concept is that of a complete set of
filters for the optimal-size sorting network problem.

\begin{definition}[complete set of filters]
\label{def:complete}
We say that a (finite) set, $\FF$, of comparator networks on $n$ channels
is a \emph{complete set of filters} for the optimal-size sorting
network problem on $n$ channels if it is the case that there exists an
optimal-size sorting network on $n$ channels if and only if there
exists one of the form $C;C'$ for some $C\in\FF$.
\end{definition}

For any given $n$ there always exists a complete set of filters:
simply take the set of all comparator networks on $n$ channels. In
this paper we will focus on the search for ``small'' complete sets in
which all filters are of the same size.

\section{The generate-and-prune approach}
\label{sec:step1}

In this section we consider the task of generating the set of all
$n$-channel comparator networks consisting of $k$ comparators. Given
this set one could, at least conceptually, inspect the networks
one-by-one to determine if there exists an $n$-channel, $k$-comparator,
sorting network. Obviously, such a naive approach is
combinatorically infeasible. With $n$ channels,
there are $n(n-1)/2$ possibilities for each comparator,
and thus incrementally adding comparators would produce
$(n(n-1)/2)^k$ networks with $k$ comparators. For $n=9$, aiming
to prove that there does not exist a sorting network with 24
comparators would mean inspecting approximately $2.25\times10^{37}$
comparator networks.
Moreover, checking whether a comparator network is a sorting network
is known to be a co-NP complete problem~\cite{DBLP:conf/parle/Parberry91}.

We propose an alternative approach, \emph{generate-and-prune}, which
is driven just as the naive approach, but takes advantage of the
abundance of symmetries in comparator networks. It is best described
after introducing a definition and a lemma.

\begin{definition}[subsumption]
Let $C_a$ and $C_b$ be comparator networks on $n$ channels. If there
exists a permutation $\pi$ such that $\pi(\outputs(C_a)) \subseteq
\outputs(C_b)$ then we denote this as $C_a\leq_{\pi}C_b$ and we say
that $C_a$ \emph{subsumes} $C_b$. We also write $C_a\preceq C_b$ to
indicate that there exists a permutation $\pi$ such that
$C_a\leq_{\pi}C_b$. 
\end{definition}
Observe that $\preceq$ is a reflexive and transitive relation, and
that ${\approx}\subseteq{\preceq}$.

\begin{lemma}
  \label{lem:outputs}
  Let $C_a$ and $C_b$ be comparator networks on $n$ channels, both of
  the same size, and such that $C_a\preceq C_b$. Then, if there exists
  a sorting network $C_b;C$ of size~$k$, there also exists a sorting
  network $C_a;C'$ of size~$k$.
\end{lemma}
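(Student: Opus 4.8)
The plan is to exploit the definition of subsumption together with the zero-one principle. Suppose $C_a \preceq C_b$, so there is a permutation $\pi$ with $\pi(\outputs(C_a)) \subseteq \outputs(C_b)$, and suppose $C_b;C$ is a sorting network of size $k$. Since $C_a$ and $C_b$ have the same size, say $m$, and $C_b;C$ has size $k$, the suffix $C$ has size $k-m$; I want to build a suffix $C'$, also of size $k-m$, such that $C_a;C'$ sorts. The natural candidate is $C' = \pi(C)$, suitably untangled, since applying $\pi$ is exactly the operation that relates the two output sets.

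First I would record the key observation about how outputs compose: for any suffix $D$, the set $\outputs(C_a;D)$ is determined by applying $D$ to each element of $\outputs(C_a)$, i.e.\ $\outputs(C_a;D) = \set{D(\vec y) \mid \vec y \in \outputs(C_a)}$, and similarly for $C_b$. Next I would use the hypothesis $\pi(\outputs(C_a)) \subseteq \outputs(C_b)$: every output $\vec y$ of $C_a$ maps under $\pi$ to some output $\pi(\vec y)$ of $C_b$. The idea is that running $C_a;C'$ on $\vec x$ and then permuting by $\pi$ should mimic running $C_b;C$ on the permuted data. Concretely, I would argue that for the generalized network $\pi(C)$ we have $\pi(C)(\pi(\vec y)) = \pi(C(\pi^{-1}(\pi(\vec y))))$-type commutation, so that the composite $\pi$ intertwines the action of $C$ on $C_b$'s outputs with the action of $\pi(C)$ on the permuted outputs of $C_a$.

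The cleanest route is to take $C' = \pi(C)$ as a generalized comparator network and then untangle it into a genuine comparator network of the same size, invoking the untangling result (Exercise~5.3.4.16 in \cite{Knuth73}) cited earlier. I would then show every output of $C_a;C'$ is sorted: pick $\vec y \in \outputs(C_a)$, so $\pi(\vec y) \in \outputs(C_b)$; since $C_b;C$ sorts, $C(\pi(\vec y))$ is sorted; by the commutation between $\pi$ and the network action, $\pi(C')(\vec y)$ relates to $C(\pi(\vec y))$ by the permutation $\pi$, and untangling preserves the sortedness of outputs while fixing the channel order. Because $\outputs(C_a)$ already captures all possible inputs after the prefix $C_a$ (by the zero-one principle it suffices to check $\{0,1\}^n$), this shows $C_a;C'$ sorts all zero-one inputs and hence is a sorting network of size $k$.

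The main obstacle I anticipate is the bookkeeping around permutations and untangling: the network $\pi(C)$ is generalized (it may contain descending comparators), and I must track carefully how the untangling step permutes channels so that the final output lands in genuine ascending order rather than merely ``sorted up to $\pi$''. In particular the inclusion $\pi(\outputs(C_a)) \subseteq \outputs(C_b)$ is one-directional, so I can only push outputs of $C_a$ into $C_b$ and must make sure this direction is the one needed; I would therefore set up the commutation identity $\pi(D)(\pi(\vec z)) = \pi(D(\vec z))$ for a comparator network $D$ and verify it comparator-by-comparator before composing, as that identity is what makes the argument go through cleanly.
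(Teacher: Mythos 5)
Your overall plan coincides with the paper's actual proof: conjugate the suffix $C$ by the subsumption permutation, untangle the resulting generalized network, and conclude that a sorting network of the form $C_a;C'$ of total size $k$ exists (the paper compresses this into one line and defers the details to the proof of Lemma~7 in~\cite{DBLP:conf/lata/BundalaZ14}). However, your construction has the permutation in the wrong direction, and your own commutation identity exposes this. With $\pi(D)(\pi(\vec z)) = \pi(D(\vec z))$ and your candidate $C' = \pi(C)$, feeding an output $\vec y \in \outputs(C_a)$ into the suffix gives $\pi(C)(\vec y) = \pi\bigl(C(\pi^{-1}(\vec y))\bigr)$, so you would need $C$ to sort $\pi^{-1}(\vec y)$, i.e.\ $\pi^{-1}(\outputs(C_a)) \subseteq \outputs(C_b)$ --- exactly the reverse of the hypothesis $\pi(\outputs(C_a)) \subseteq \outputs(C_b)$. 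Your third paragraph silently slides $\pi(\vec y)$ into the argument of the suffix to make the sortedness of $C(\pi(\vec y))$ usable; that step is only valid when $\pi$ is an involution. The correct candidate is $\pi^{-1}(C)$: then $\pi^{-1}(C)(\vec y) = \pi^{-1}\bigl(C(\pi(\vec y))\bigr)$, and since $\pi(\vec y) \in \outputs(C_b)$ and $C$ sorts everything in $\outputs(C_b)$, the inner vector is sorted.

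Second, the obstacle you flag at the end --- output ``sorted up to $\pi$'' rather than sorted --- is real and is not discharged by the untangling result as you quote it. Even after correcting the direction, every output of $C_a;\pi^{-1}(C)$ is $\pi^{-1}$ applied to a sorted vector, so this network is \emph{not} a generalized sorting network in the paper's sense, and the plain statement ``generalized sorting networks can be untangled'' does not literally apply. Moreover, untangling the suffix in isolation, as you propose, is not licensed: the untangling guarantee concerns whole networks that sort, not arbitrary fragments. What is needed --- and what the cited proof of Lemma~7 in~\cite{DBLP:conf/lata/BundalaZ14} supplies --- is the slightly stronger form: a generalized network whose output on every input is a \emph{fixed} permutation of the sorted input can be untangled into a standard sorting network of the same size; and because the untangling procedure only rewrites comparators from the first nonstandard one onward, the standard prefix $C_a$ is left untouched, so the result genuinely has the form $C_a;C'$ with $|C'| = |C|$ and total size $k$. (Your appeal to the zero-one principle is harmless but unnecessary, since the paper defines sorting networks over binary inputs to begin with.) In short: right architecture, but as written the key verification step fails on the direction of $\pi$, and the ``sorted up to a fixed permutation'' refinement of untangling, applied to the whole network, is the missing ingredient rather than a bookkeeping detail.
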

\begin{proof}
  Under the hypotheses, there exists a permutation $\pi$ such that
  $C_a\leq_{\pi}C_b$.  Untangling $C_b;\pi^{-1}(C)$ into $C_b;C'$
  yields the desired sorting network (see the proof of the similar Lemma~7
  in~\cite{DBLP:conf/lata/BundalaZ14} for details).
\end{proof}

Lemma~\ref{lem:outputs} implies that, when adding a next comparator
in the naive approach, we do not need to consider all possible
positions to place it. In particular, we can omit networks which are
subsumed by others.

The \emph{generate-and-prune} algorithm is as follows, where $R^n_k$
and $N^n_k$ are sets of $n$ channel networks each consisting of $k$
comparators. First, initialize the set $R^n_0$ to consist of a single
element: the empty comparator network. Then, repeatedly apply two
types of steps, \algorithm{Generate} and \algorithm{Prune}, to add
comparators in all possible ways incrementally,
and then remove those subsumed by others.

\begin{enumerate}
\item\algorithm{Generate}: Given the set $R^n_k$, derive the set $N^n_{k+1}$
  containing all nets obtained by adding one extra comparator to each
  element of $R^n_k$ in all possible ways.
\item\algorithm{Prune}: Given the set $N^n_{k+1}$, derive the set
  $R^n_{k+1}$ obtained by pruning $N^n_{k+1}$ to remove networks
  subsumed by those which are not pruned.
\end{enumerate}

The pruning step can thus be described as keeping only one network producing
each minimal set of outputs (under permutation).
In other words, it keeps one representative of each equivalence class of
minimal networks w.r.t.\ $\preceq$, independently of the order in which
the subsumption tests are made.

\begin{lemma}\label{lem:complete}
  For every $n$ and $k$, the sets $N^n_k$ and $R^n_k$ are complete
  sets of filters on $n$ channels.
\end{lemma}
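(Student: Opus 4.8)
The plan is to prove the statement by induction on $k$, showing that both $N^n_k$ and $R^n_k$ are complete sets of filters. The base case is $k=0$: here $R^n_0$ consists of the single empty comparator network, which is trivially a complete set of filters, since any sorting network $C'$ can be written as $C;C'$ with $C$ the empty prefix. For the inductive step, I would assume that $R^n_k$ is a complete set of filters and then argue separately that \algorithm{Generate} produces a complete set $N^n_{k+1}$, and that \algorithm{Prune} preserves completeness when passing from $N^n_{k+1}$ to $R^n_{k+1}$.

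For the \algorithm{Generate} step, the key observation is the following. Suppose an optimal-size sorting network exists. By the induction hypothesis applied to $R^n_k$, there is one of the form $C;C'$ with $C\in R^n_k$. If $C'$ is empty, then $C$ itself is a sorting network and we must check that the degenerate case still fits the definition; otherwise $C'=(i,j);C''$ for some comparator $(i,j)$, and then $C;(i,j)$ is a size-$(k+1)$ network obtained by extending $C$ with one comparator, so by construction $C;(i,j)\in N^n_{k+1}$, and the optimal sorting network has the form $\bigl(C;(i,j)\bigr);C''$ with prefix in $N^n_{k+1}$. The converse direction is immediate: every element of $N^n_{k+1}$ is a comparator network, so any sorting network of the form $C;C'$ with $C\in N^n_{k+1}$ already witnesses the existence of a sorting network. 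This establishes that $N^n_{k+1}$ is a complete set of filters.

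For the \algorithm{Prune} step, I would invoke Lemma~\ref{lem:outputs}. The pruning replaces $N^n_{k+1}$ by a subset $R^n_{k+1}$ such that every network removed is subsumed by some network that is retained; all retained networks have the same size $k+1$. Thus if a sorting network $C_b;C$ exists with $C_b\in N^n_{k+1}$, then either $C_b\in R^n_{k+1}$ and we are done, or $C_b$ was pruned because $C_a\preceq C_b$ for some $C_a\in R^n_{k+1}$ of the same size, in which case Lemma~\ref{lem:outputs} yields a sorting network $C_a;C'$ of the same size $k$ with $C_a\in R^n_{k+1}$. Hence completeness of $N^n_{k+1}$ transfers to $R^n_{k+1}$. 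Combined with the \algorithm{Generate} step this closes the induction.

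The main obstacle, and the point requiring the most care, is the boundary condition on sizes in the \algorithm{Generate} step: the definition of complete set of filters quantifies over sorting networks of the form $C;C'$ where $C$ ranges over the filter set, and one must handle the case where the optimal network has size exactly $k$ (so the relevant $C'$ is empty) without creating a gap at the transition from $k$ to $k+1$ comparators. I would address this by being explicit that the empty suffix is permitted in Definition~\ref{def:complete}, so that a network already belonging to the filter set counts as witnessing completeness. The rest of the argument is a direct bookkeeping exercise once Lemma~\ref{lem:outputs} is in hand, since that lemma precisely matches the same-size subsumption structure that \algorithm{Prune} relies upon.
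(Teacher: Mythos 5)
Your overall strategy---induction on $k$, with \algorithm{Generate} preserving completeness because every one-comparator extension of every element of $R^n_k$ is enumerated, and \algorithm{Prune} preserving completeness via Lemma~\ref{lem:outputs}---is precisely the argument the paper relies on; the paper states the lemma without an explicit proof, and Lemma~\ref{lem:outputs} is its entire substance. Your treatment of the \algorithm{Prune} step is correct: a removed network is subsumed by a retained network of the same size (reflexivity and transitivity of $\preceq$ cover chains and equivalence classes), and Lemma~\ref{lem:outputs} then transfers the optimal-size witness. One cosmetic slip: the lemma gives a sorting network $C_a;C'$ of the same size as $C_b;C$, i.e.\ of the optimal size $S(n)$, not ``of the same size $k$'' as you wrote.

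The one place your write-up actually goes wrong is the degenerate case you yourself flagged. Permitting the empty suffix in Definition~\ref{def:complete} does make the base case and the \algorithm{Prune} step unproblematic, but it does \emph{not} rescue the \algorithm{Generate} step when the optimal size equals $k$: if every optimal-size sorting network has exactly $k$ comparators, then none of them can be written as $C;C''$ with $C\in N^n_{k+1}$, empty suffix or not, since a prefix cannot be longer than the whole network. So your proposed fix fails in exactly that sub-case---though, to be fair, the lemma as literally stated (``for every $n$ and $k$'') also fails there, so no proof could close it. The honest repair is to restrict the claim, and your induction, to $k\leq S(n)$, the only range in which the lemma is used: then in the \algorithm{Generate} step any optimal network $C;C'$ with $|C|=k<S(n)$ satisfies $|C'|=S(n)-k\geq 1$, so the suffix is nonempty and your decomposition $C'=(i,j);C''$ always applies. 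Equivalently, one stops the induction once a sorting network appears in $R^n_k$, at which point (as the paper remarks) pruning leaves a singleton and the process terminates.
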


Note that if a set of networks includes a sorting
network, then pruning that set will leave precisely one element (a
sorting network).

The \algorithm{Generate} and \algorithm{Prune} algorithms, shown in
Figure~\ref{fig:algorithms}, are both very simple.  However, they
operate on huge data sets, consisting of millions of comparator
networks.  So, it is the small implementation details that render them
computationally feasible.  We first describe their schematic
implementation and then describe some of their finer details.

\begin{figure}
\begin{multicols}{2}
\paragraph{Algorithm \algorithm{Generate}.}\ \\
\begin{tabbing}
\textsf{input:} $R^n_k$;  \quad \textsf{output:} $N^n_{k+1}$; \\
$N^n_{k+1}=\emptyset$; \\
$C_n = \sset{(i,j)}{1\leq i<j\leq n}$ \\
\textsf{for} \= $C\in R^n_k$ \textsf{and} $c\in C_n$ \textsf{do} \+ \\
  $N^n_{k+1}=N^n_{k+1}\cup\{C;c \}$; \\ \\ \\ \\
\end{tabbing}
\paragraph{Algorithm \algorithm{Prune}.}

\begin{tabbing}
\textsf{input:} $N^n_k$; \quad \textsf{output:} $R^n_k$; \\
$R^n_k=\emptyset$; \\
\textsf{for} \= $C\in N^n_k$ \textsf{do} \+ \\
  \textsf{for} \= $C'\in R^n_k$ \textsf{do} \+ \\
    \textsf{if} ($C'\preceq C$)  mark $C$; \- \\
  \textsf{if} (not\_marked($C$))  \+ \\
    $R^n_k =R^n_k \cup \{C\}$;  \\
    \textsf{for} \= $C'\in R^n_k$ \textsf{do} \+ \\
      \textsf{if} ($C\preceq C'$)  $R^n_k =R^n_k \setminus \{C'\}$;
\end{tabbing}
\end{multicols}
\caption{The \algorithm{Generate} and \algorithm{Prune} algorithms.}
\label{fig:algorithms}
\end{figure}

The \algorithm{Generate} algorithm takes a set, $R^n_k$, of networks,
and adds to each network in the set one new comparator in
every possible way. There are $n(n-1)/2$ ways to add a comparator
on $n$ channels, hence, the execution time of \algorithm{Generate}
is $O\left(n^2\times\left|R^n_k\right|\right)$.

The \algorithm{Prune} algorithm basically tests each network from its input, $N^n_k$, keeping only those networks which are not subsumed by any other network encountered so far. These minimal (w.r.t.\ subsumption) networks are kept in the set $R^n_k$, which after execution of the algorithm contains a complete set of filters on $n$ channels.
The sets $R^n_k$ are initially empty, and then they grow and shrink throughout the run of the algorithm, until finally containing only minimal elements in the order $\preceq$.
While theoretically $R^n_k$ could first grow to nearly $|N^n_k|$ before
collapsing to its final size, experimentation indicates that the
intermediate sizes of $R^n_k$ are bounded by its final size.
Thus, the algorithm is posed such that the outer loop is on the elements of
$N^n_k$, and the inner loop on the current set $R^n_k$.

In this manner, the worst-case behavior of \algorithm{Prune}
is $O\left(\left|N^n_k\right|\times \left|R^n_k\right|
  \times f(n)\right)$, where $f(n)$ is the cost of a single subsumption
test.
A naive implementation tests if $C_a\preceq C_b$ maintaining the
sets $S_a= \outputs(C_a)$ and $S_b= \outputs(C_b)$ and iterating over
the space of $n!$ permutations to test if there exists a permutation $\pi$
such that $\pi(S_a) \subseteq S_b$.

These very simple algorithms are straightforward to implement, test
and debug. Our implementation is written in Prolog and can be applied to
reconstruct all of the known values for $S_n$ for $n\leq 6$ in under
an hour of computation on a single core.
The table below shows the values for $\left|R^n_k\right|$ when $n\leq
8$; the values for $n=7,8$ were obtained using the optimized version of
our implementation described in the next sections.  For any $k$,
if there is no sorting network on $n$~channels with $k$~comparators,
then $\left|R^n_k\right|>1$, since a sorting network trivially subsumes any
other comparator network.  Recall also that
$\left|N^n_k\right|=\frac{n(n-1)}2\left|R^n_{k-1}\right|$.

{\small%
\[\begin{array}{r|r|r|r|r|r|r|r|r|r|r|r|r|r|r|r|r|r|r|r}
\!\!n\backslash k
& \multicolumn{1}{c|}{\!1\!}
& \multicolumn{1}{c|}{\!2\!}
& \multicolumn{1}{c|}{\!3\!}
& \multicolumn{1}{c|}{4}
& \multicolumn{1}{c|}{5}
& \multicolumn{1}{c|}{6}
& \multicolumn{1}{c|}{7}
& \multicolumn{1}{c|}{8}
& \multicolumn{1}{c|}{9}
& \multicolumn{1}{c|}{10}
& \multicolumn{1}{c|}{11}
& \multicolumn{1}{c|}{12}
& \multicolumn{1}{c|}{13}
& \multicolumn{1}{c|}{14}
& \multicolumn{1}{c|}{15}
& \multicolumn{1}{c|}{\!16\!}
& \multicolumn{1}{c|}{\!17\!}
& \multicolumn{1}{c|}{\!18\!}
& \multicolumn{1}{c}{\!19\!} \\ \hline
2&\!1\!&&&&&&&&&&&&&&&&& \\
3&\!1\!&\!2\!&\!1\!&&&&&&&&&&&&&&& \\
4&\!1\!&\!3\!&\!4\!&2&1&&&&&&&&&&&&& \\
5&\!1\!&\!3\!&\!6\!&\!11\!&\!10\!&7&6&4&1&&&&&&&&&& \\
6&\!1\!&\!3\!&\!7\!&\!17\!&\!36\!&53&53&44&23&8&4&1&&&&&&&\\
7&\!1\!&\!3\!&\!7\!&\!19\!&\!51\!&\!141\!&\!325\!&564&678&510&280&106&33&11&6&1&&&\\
8&\!1\!&\!3\!&\!7\!&\!20\!&\!57\!&\!189\!&\!648\!&\!2088\!&\!5703\!&\!11669\!&\!16095\!&\!13305\!&\!6675\!&\!2216\!&\!503\!&\!77\!&\!18\!&9&1\\
\end{array}\]}

We analyze the case $n=7$ in some detail.  There are $21$
possibilities for the first comparator $(i,j)$ on a $7$-channel comparator
network; however, these are all equivalent by means of the permutation
$(i\;1)(j\;2)$.  Hence $\left|R^7_1\right|=1$.  We assume the single
representative to be the network $(1,2)$.
The second comparator can again be one of the same $21$; but there are
only four possibilities that are not equivalent: either it is again
$(1,2)$, or it is of the form $(1,j)$ with $j\neq 2$, or of the form
$(2,j)$ with $j>2$, or of the form $(i,j)$ with $2<i<j$.  The first
possibility yields a comparator network that is subsumed by any of the
others. 
For the other three possibilities, suitable permutations can map the
second comparator to $(1,3)$, $(2,3)$ or $(3,4)$, respectively.
Therefore, $\left|R^7_2\right|=3$, and the representatives can be
chosen to be net $(1,2);(1,3)$, net $(1,2);(2,3)$ and net
$(1,2);(3,4)$.
A similar reasoning shows that there are only seven possibilities for the
three-comparator networks, and a representative set contains e.g.:

\begin{multicols}{4}
\begin{itemize}
\item $(1,2);(2,3);(1,2)$
\item $(1,2);(3,4);(1,3)$
\item $(1,2);(3,4);(1,4)$
\item $(1,2);(3,4);(1,5)$
\item $(1,2);(3,4);(2,4)$
\item $(1,2);(3,4);(2,5)$
\item $(1,2);(3,4);(5,6)$
\end{itemize}
\end{multicols}

\section{Implementing generate-and-prune}
\label{sec:implement}
This section describes details of the implementation of the
\algorithm{Generate} and \algorithm{Prune} algorithms and the
optimizations that, in the end, make it possible to compute the
precise value of $S(9)=25$.
Here we keep in mind that the values for $n^2$, $2^n$, and $n!$ where
$n=9$ are constants: $81$, $512$, and $362{,}880$. On the other hand,
the number of elements in $\left|N^9_{24}\right|$
could potentially grow to more than $10^{37}$.

\subsection{Representing comparator networks}

The inner loops in the \algorithm{Prune} algorithm involve subsumption
tests on pairs of networks. We implement these in terms of the search
for a permutation under which the outputs of the one network are a subset
of the outputs of the other. Moreover, as each network is
tested for subsumption multiple times,
we choose to represent a comparator network, explicitly, together with
the set of its outputs.  It is convenient to represent the output binary
sequence $\vec x=x_1\ldots x_n$ by the corresponding binary number
(least significant bit first), $\#\vec x$.
With this representation, $x_i=(\#\vec x/2^{i-1}\bmod 2)$, where
`$/$'
stands for integer division, and the result of exchanging positions
$i$ and $j$ in $\vec x$ translates to computing $\#\vec
x-2^{i-1}+2^{j-1}$ when $x_i=1$ and $x_j=0$, the only case when such
an exchange is necessary.  These operations can be implemented extremely efficiently, e.g.\ using shifts.

As an example, consider the comparator network $C = (1,2);(3,4);(1,3)$
on four channels with $\outputs(C) = \{0000, 0001, 0011, 0100, 0110,
0101, 0111,1111\}$, represented as the set $\{0, 8, 12, 2, 6, 10, 14,
15\}$.  Consider the output $\vec x=0101$, for which $\#\vec x=10$.  We
have $x_1=(10/2^0\bmod 2) = 0$ and $x_2=(10/2^1\bmod 2) = %
1$, and likewise $x_3=0$ and $x_4=1$.  Since $x_2>x_3$, applying the
comparator $(2,3)$ to $\vec x$ yields the sequence $\vec y$ such that
$\#\vec y=\#\vec x-2^1+2^2= 12$, namely the sequence $0011$.
In the same way, it is easy to check that $\outputs(C;(2,3))$ is
represented as the set $\{0,8,12,4,6,14,15\}$.

Given this choice, in \algorithm{Generate}, adding a
comparator $(i,j)$ to a network $C$ simply requires applying
$(i,j)$ to those elements $\#\vec x$ of the set of
outputs in the representation of $C$ for which $x_i>x_j$.
So, the cost of computing output sets
\emph{diminishes} with each extra comparator, since the sizes of the
output sets decrease with each addition.
In the example above, adding the comparator $(2,3)$ to the network
would change $10$ to $12$ and $2$ to $4$.

The \algorithm{Generate} algorithm is implemented to produce a file
where each network is tupled with the set of its outputs (represented
as numbers) and some additional information that is detailed
below. 
Moreover, the elements in these sets are partitioned according to the
number of ones their binary representation contains,
as this facilitates the optimizations
described below. For instance, in the context of the previous example,
we represent $C$ as the following triplet, where $W$ is described in
the next section.
\begin{equation}
\label{eq:triplet}
\left\langle \{(1,2);(3,4);(1,3)\},
  \left\{ \{0\}, \{2, 8\}, \{6, 10, 12\},
    \{14\},\{15\}\right\}, W \right\rangle
\end{equation}

Even though we are adding extra information exponential in $n$, this
is still manageable in practice.  Case in point, the largest file
encountered in the proof of $n=9$ contains $N^9_{15}$ and is just
under $11$ GB in size. We need to keep at most two files at any given
point of time: to support pruning of $N^9_k$ to $R^9_k$, and to
support extending $R^9_k$ to $N^9_{k+1}$.

\subsection{Implementing the test for subsumption}
\label{subsection:subsumption}

We implemented the subsumption test $C_1\preceq C_2$ in
\algorithm{Prune} as the search problem of finding a permutation
$\pi$ such that $\pi(\outputs(C_1))\subseteq \outputs(C_2)$. For $9$
channels, this might involve considering $362{,}880$ permutations.
We illustrate why, in many cases, it is computationally easy to detect
the non-existence of such a permutation, and how we restrict the
search space considerably in the other cases.
This optimization is crucial to move beyond the case of $6$~channels.

Let $S_1=P_0 \uplus \ldots \uplus P_n$ and $S_2= Q_0 \uplus \ldots \uplus Q_n$ be two sets
of length-$n$ binary sequences partitioned according to the number of
$1$s in the sequences. A basic observation that can be applied to
refine the search for a suitable permutation $\pi$ is that
$\pi(S_1)\subseteq S_2$ if and only if $(\pi(P_0)\subseteq
Q_0)\land\cdots\land(\pi(P_n)\subseteq Q_n)$.
Moreover, there are several easy-to-check criteria which apply to
determine that no such $\pi$ exists.  We introduce these through an
example.  

Figure~\ref{example:perm} details three $5$-channel comparator
networks together with their sets of outputs partitioned according to
their number of ones. Focusing on the column detailing the output
sequences with two 1s, it is clear that $C_2\not\preceq C_1$. Indeed,
any permutation of $\outputs(C_2)$, must have four sequences with two $1$s
each, and hence $\pi(\outputs(C_2))$ cannot be a subset of
$\outputs(C_1)$, which has only three sequences with two $1$s.
The same type of argument implies that $C_2\not\preceq C_3$,
$C_3\not\preceq C_1$ (looking at outputs with four $1$s) and
$C_3\not\preceq C_2$ (looking at outputs with one $1$).

\begin{figure}[t]
\centering
\[\begin{array}{c|c|c|c|c|c|c}
\mbox{no.\;of $1$s:} & 0 & 1 & 2 & 3 & 4 & 5 \\ \hline
C_1~\raisebox{-\height/2}{\includegraphics{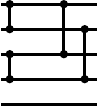}}
& 00000
& \mline{00001\\ 00010}
& \mline{00011\\ 00110\\ 01010}
& \mline{00111\\ 01011\\ 01110}
& \mline{01111\\ 11110}
& 11111 \\ \hline
C_2~\raisebox{-\height/2}{\includegraphics{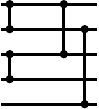}}
& 00000
& \mline{00001\\ 00010}
& \mline{00011\\ 00101\\ 00110\\ 01001}
& \mline{00111\\ 01011\\ 01101}
& \mline{01111\\ 10111}
& 11111 \\ \hline
C_3~\raisebox{-\height/2}{\includegraphics{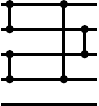}}
& 00000
& \mline{00001\\ 00010\\ 00100}
& \mline{00011\\ 00101\\ 00110}
& \mline{00111\\ 01110\\ 10110}
& \mline{01111\\ 10111\\ 11110}
& 11111
\end{array}\]
  \caption{Three 5-channel comparator networks with their partitioned output sets.}
\label{example:perm}
\end{figure}

More formally, we state the following lemma.
\begin{lemma}
  \label{lem:presolve}
  Let $C_a$ and $C_b$ be $n$ channel comparator networks.  If there
  exists $1\leq k\leq n$ such that the number of sequences with
  $k$ $1$s in $\outputs(C_a)$ is greater than that in $\outputs(C_b)$,
  then $C_a\not\preceq C_b$.
\end{lemma}

Experiments show that, in the context of this paper, more than $70\%$
of the subsumption tests in the application of the \algorithm{Prune}
algorithm are eliminated based on Lemma~\ref{lem:presolve}.

Focusing again on Figure~\ref{example:perm}, this time on the column
detailing the output sequences with three 1s, it becomes clear that
$C_1\not\preceq C_3$.  This is because the digit $0$ occurs in four
different positions in the sequences for $C_1$, and this will remain
the case when applying any permutation to its elements, but only in
three different positions in the sequences for $C_3$.  To formalize
this observation we introduce some notation.  If $C$ is an $n$-channel
comparator network, $x\in\{0,1\}$, and $0\leq k\leq n$ is an integer
value, then $w(C,x,k)$ denotes the 
set of positions $i$ such that there exists a vector $x_1\ldots x_n$
in $\outputs(C)$ containing $k$ ones, and such that $x_i=x$.

\begin{lemma}
  \label{lem:filterPosition}
  Let $C_a$ and $C_b$ be $n$ channel comparator networks. If for some
  $x\in\{0,1\}$ and $0\leq k\leq n$, $|w(C_a,x,k)|>|w(C_b,x,k)|$ then
  $C_a\not\preceq C_b$.
\end{lemma}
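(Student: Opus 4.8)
The plan is to prove the contrapositive: assuming $C_a \preceq C_b$, I will show that $|w(C_a,x,k)| \leq |w(C_b,x,k)|$ for every $x \in \{0,1\}$ and every $0 \leq k \leq n$. By definition of $\preceq$, there is a permutation $\pi$ on $\{1,\ldots,n\}$ with $\pi(\outputs(C_a)) \subseteq \outputs(C_b)$, and the whole argument rests on understanding how $\pi$ interacts with the quantity $w$.

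First I would record the two structural facts about permutations acting on binary sequences. Since $\pi$ merely rearranges the positions of a sequence, it leaves the number of $1$s unchanged; hence $\pi$ maps those outputs of $C_a$ that have exactly $k$ ones onto sequences that again have exactly $k$ ones. Second, and this is the key bookkeeping step, I would establish the equivariance identity $w(\pi(S),x,k) = \pi(w(S,x,k))$ for an arbitrary set $S$ of length-$n$ sequences, where $w(S,x,k)$ abbreviates the set of positions occupied by the value $x$ among the $k$-ones members of $S$ (so that $w(C,x,k)=w(\outputs(C),x,k)$). Unwinding the definitions, a position $j$ lies in $w(\pi(S),x,k)$ exactly when some $\vec y = \pi(\vec x)$ with $k$ ones satisfies $y_j = x$; writing $y_j = x_{\pi^{-1}(j)}$ translates this into the statement that $\pi^{-1}(j)$ lies in $w(S,x,k)$, i.e.\ $j \in \pi(w(S,x,k))$. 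The only point that requires care here is the direction of the permutation in the index, and I expect this to be the sole delicate part of the whole proof.

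Having established the identity, the conclusion follows by combining it with monotonicity. Because $\pi$ is a bijection on $\{1,\ldots,n\}$, the identity gives $|w(\pi(S),x,k)| = |w(S,x,k)|$, so the cardinality of $w$ is invariant under the action of $\pi$. Monotonicity is immediate from the definition: if $S \subseteq S'$ then $w(S,x,k) \subseteq w(S',x,k)$, whence $|w(S,x,k)| \leq |w(S',x,k)|$. Applying cardinality-invariance with $S = \outputs(C_a)$ and then monotonicity to the inclusion $\pi(\outputs(C_a)) \subseteq \outputs(C_b)$ yields
\[
|w(C_a,x,k)| = |w(\pi(\outputs(C_a)),x,k)| \leq |w(\outputs(C_b),x,k)| = |w(C_b,x,k)|,
\]
which is exactly the contrapositive of the statement. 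The main obstacle is therefore not any deep argument but the clean formulation of the equivariance identity; once $w$ is seen to be both permutation-invariant in cardinality and monotone under set inclusion, the lemma drops out as a one-line consequence.
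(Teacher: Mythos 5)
Your proof is correct and takes essentially the same route as the paper, which justifies this lemma informally by noting that applying a permutation to the output set cannot change the number of distinct positions in which the digit $x$ occurs among the sequences with $k$ ones; your equivariance identity $w(\pi(S),x,k)=\pi(w(S,x,k))$ combined with monotonicity of $w$ under set inclusion is precisely the formalization of that observation. The one delicate point you flag---the direction of the index in $y_j = x_{\pi^{-1}(j)}$---is handled correctly, and in any case only the cardinality $|w(\pi(S),x,k)| = |w(S,x,k)|$ is needed, which holds under either convention.
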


Experiments show that, in the context of this paper, around $15\%$ of
the subsumption tests in the application of the \algorithm{Prune}
algorithm that are not eliminated based on  Lemma~\ref{lem:presolve}
are subsequently eliminated by application of Lemma~\ref{lem:filterPosition}.

In order to apply this criterion %
efficiently, the sets $w(C,x,k)$, for $x\in\{0,1\}$ and $0\leq k\leq
n$, are computed when $C$ is generated and maintained as part of the
representation of $C$. This is the third element, $W$, in the triplet of
Equation~\eqref{eq:triplet}.

In the following lemma, we observe that the information in the sets
$w(C,x,k)$ is also helpful in restricting the search space for a
suitable permutation.

\begin{lemma}
  \label{lem:legalPermutation}
  Let $C_a$ and $C_b$ be $n$-channel comparator networks and $\pi$ be a permutation.
  If $\pi(\outputs(C_a))\subseteq\outputs(C_b)$,
  then $\pi(w(C_a,x,k))\subseteq w(C_b,x,k)$ for all $x\in\{0,1\}$, $1\leq k\leq n$.
\end{lemma}

Implementing these optimizations in \algorithm{Prune}
reduces the computation time for $6$~channels by a factor of over
$200$, and allows the verification of the known results for $n=7$ in a
few minutes and for $n=8$ in a few hours.  For $n=7$, the largest set
of reduced networks that has to be considered is $R^7_9$, which
contains $678$ elements.  Of the $33$~million subsumption tests
performed in the whole run, more than $27$~million
were solved by
application of Lemma~\ref{lem:presolve} and another approx.\ $600$ thousand by Lemma~\ref{lem:filterPosition}.

\subsection{Avoiding redundant comparators}

Let us come back to the operation of incrementally adding comparators
as specified in \algorithm{Generate}. In some cases, it
is easy to identify that a comparator is \emph{redundant} and not to
add it in the first place. Networks obtained by adding a redundant
comparator would anyway be removed by \algorithm{Prune},
but that involves the more expensive subsumption test.

Consider a comparator network of the form $C;(i,j);C'$.  We say that
$(i,j)$ is redundant if $x_i \leq x_j$ for all sequences $x_1 \ldots
x_n \in \outputs(C)$.
This notion of redundant comparators is simpler than the one
proposed in Exercise 5.3.4.51 of~\cite{Knuth73} (credited to
R.L.~Graham), but equivalent for standard sorting networks.
Since comparator networks are represented explicitly together with
their output sets, this condition is straightforward to check.

In the loop of \algorithm{Generate}, we refrain from adding redundant
comparators to the networks being extended, thus guaranteeing that
there are no redundant comparators in $R^n_k$.
Correctness of not adding redundant comparators follows in the same way as
in the context of Exercise 5.3.4.51 of~\cite{Knuth73}. Let
$C;(i,j);C'$ be a sorting network obtained by extending $C;(i,j)$.  If
$(i,j)$ is redundant, then $C;C'$ is also a sorting network, and
smaller. Implementing this optimization, 
\begin{wrapfigure}[9]{r}{0.3\textwidth}
  \vspace*{-4ex}
\paragraph{Algorithm \algorithm{Generate$'$}.}\ 
\begin{tabbing}
\textsf{input:} $R^n_k$;  \quad \textsf{output:} $N^n_{k+1}$; \\
$N^n_{k+1}=\emptyset$; \\
$C_n = \sset{(i,j)}{1\leq i<j\leq n}$ \\
\textsf{for} \= $C\in R^n_k$ \textsf{and} $c\in C_n$ \textsf{do} \+ \\
  \textsf{if} ($\neg$\textsf{redundant}($C$,$c$))  \\
  \qquad $N^n_{k+1}=N^n_{k+1}\cup\{C;c \}$;
\end{tabbing}
\end{wrapfigure}
depicted as Algorithm \algorithm{Generate$'$} on the right, the values of
$\left|N^n_k\right|$ drop significantly, especially as $k$ increases.
Typically, the highest value of $\left|N^n_k\right|$ is reduced by
more than $40\%$; subsequent values drop even more, although their
impact on computation time is less pronounced.  As a result, the total
execution time for generate-and-prune is reduced to about one half for
each value of $n\leq 8$.
The size of the largest $\left|N^n_k\right|$ is given in the table
below, for $n=6$, $7$ and $8$, without any optimizations and when
refraining from adding redundant comparators.
\[\begin{array}{c|r|r|r|c}
n & \multicolumn{1}{c|}k & \multicolumn{1}{c|}{\mbox{ original }} & \multicolumn{1}{c|}{\mbox{ no redundancies }} & \mbox{ relative reduction } \\ \hline
6 & 7 & 795 & 457 & 42.5\% \\
7 & 10 & 14{,}238 & 7{,}438 & 47.8\% \\
8 & 12 & 450{,}660 & 253{,}243 & 43.8\%
\end{array}\]

The optimal sorting networks for sizes $5$ to $8$ found by our optimized
generate-and-prune algorithm are given below.

\vspace{1ex}\noindent\hfill
\raisebox{-\height/2}{\includegraphics{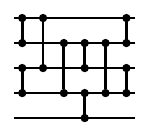}}
\hfill
\raisebox{-\height/2}{\includegraphics{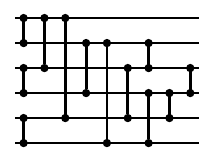}}
\hfill
\raisebox{-\height/2}{\includegraphics{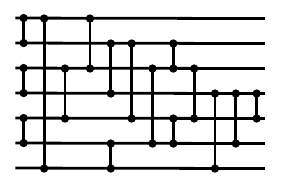}}
\hfill
\raisebox{-\height/2}{\includegraphics{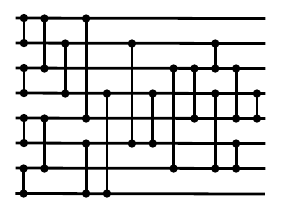}}
\hspace*\fill\vspace{2ex}

The execution of generate-and-prune for $n=9$, $k \in \{0,\ldots,25\}$
remains a daunting task. To see this, consider that the growth of the
values of $\left|N^9_k\right|$ and $\left|R^9_k\right|$ (which for $k
= 14$ turned out to be $18{,}420{,}674$ and $914{,}444$, respectively)
requires more than $10$ trillion subsumption checks, each in the
worst-case requiring to check $9!=362{,}380$ permutations. On a
positive note, the optimizations described up to here allow the
algorithms to be run for $n=9$ within the life span of a human being
(more precisely, an expected approx.\ $9$ years of computation on a
single core).

\subsection{Parallelization}

In order to reduce the \emph{total} execution time of
generate-and-prune for $n=9$, we developed parallelized versions of
both algorithms.  
We consider a distributed-memory architecture consisting of $p$
processing elements.  For \algorithm{Generate}, the parallelization is
straightforward, as the the extension of each network in $R^n_k$ can
be done independently, i.e., the set can be split into $|R^n_k|$
singleton sets, which can be processed by \algorithm{Generate} in
parallel.  In addition, the resulting extensions are all pairwise
different, so set union can be implemented as a simple
merge-by-concatenation of the extensions. As the number of networks to
extend is typically considerably larger than the number of processing
elements $p$, and both splitting and merging incur some overhead, in
practice we divided $R^n_k$ into $p$ sets of equal size. As the
sequential algorithm is linear in $\left|R^n_k\right|$ and there is no
communication overhead in the parallel version, the latter has
constant isoefficiency~\cite{DBLP:journals/ieeecc/GramaGK93}.
Figure~\ref{fig:par-generate} presents a straightforward
parallelization of \algorithm{Generate}, where \textsf{for$||^p$}
indicates a parallel for-each loop using
$p$ processing elements at the same time.

\begin{figure}
\begin{multicols}{2}
\paragraph{Algorithm \algorithm{Parallel-Generate}.}\ \\
\begin{tabbing}
\textsf{input:} $R^n_k$;  \quad \textsf{output:} $N^n_{k+1}$; \\
split $R^n_k$ into sets $R_1,\ldots,R_p$ \\
\textsf{for$||^p$} \= $i \in \{1,\ldots,p\}$ \textsf{do} \+ \\
$S_i = \algorithm{Generate$'$}(R_i)$; \- \\
$N^n_{k+1} = \biguplus_{1 \leq i \leq p} S_i$;
\end{tabbing}
\scalebox{.8}{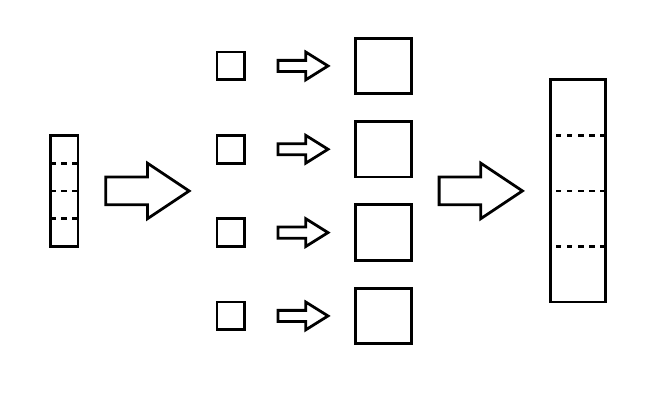}
\end{multicols}
\caption{Parallelization of \algorithm{Generate}.  The diagram on the right
  schematizes this process for $p=4$.}
\label{fig:par-generate}
\end{figure}

For \algorithm{Prune}, the parallelization is less trivial, as each
network from $N^n_k$ needs to be checked against all networks in the
current set of minimal (w.r.t.\ subsumption) networks.  In order to
make best use of the processing elements, we divide the parallel
execution into two phases. In the first phase, we split $N^n_k$ evenly
into $m\times p$ sets $S_1,\ldots,S_{m\times p}$, where for $m$ we
choose a multiplier for $p$ such that the individual sets have a
practically manageable size. Then we execute \algorithm{Prune} on these
sets in parallel.
In the second phase, for each set $S_i$ we still have to remove all networks that are subsumed
by networks in any other set $S_j$. To this end we define the algorithm \algorithm{Remove} (see Figure~\ref{fig:par-prune}),
which is a variant of \algorithm{Prune} where subsumption is only considered in one direction.

\begin{figure}
\begin{multicols}{2}
\paragraph{Algorithm \algorithm{Remove}.}\ \\
\begin{tabbing}
\textsf{input:} $S_i$ and $S_j$; \quad \textsf{output:} $S_i'$; \\
$S_i'=\emptyset$ \\
\textsf{for} \= $C\in S_i$ \textsf{do} \+ \\
  \textsf{for} \= $C'\in S_j$ \textsf{do} \+ \\
    \textsf{if} $(C' \preceq C)$ mark $C$ \- \\
\textsf{if} (not\_marked($C$)) \+ \\
    $S'_i =S'_i \cup \{C\}$; \\
\end{tabbing}

\paragraph{Algorithm \algorithm{Parallel-Prune}.}
\begin{tabbing}
\textsf{input:} $N^n_k$ and $m\times p$; \quad \textsf{output:} $R^n_k$; \\
split $N^n_k$ into sets $S_1,\ldots,S_{m\times p}$ \\
\textsf{for$||^p$} \= $i \in \{1,\ldots,m\times p\}$ \textsf{do} \+ \\
  $S_i = \algorithm{Prune}(S_i)$; \- \\
\textsf{for} $j \in \{1,\ldots,m\times p\}$ \textsf{do} \+ \\
  \textsf{for$||^p$} \= $i \in \{1,\ldots,m\times p\} \setminus \{j\}$ \textsf{do} \+ \\
    $S_i = \algorithm{Remove}(S_i,S_j)$; \- \- \\
$R^n_k = \biguplus_{1 \leq i \leq m\times p} S_i$;
\end{tabbing}
\end{multicols}

\vspace{2ex}
\centering
\scalebox{1}{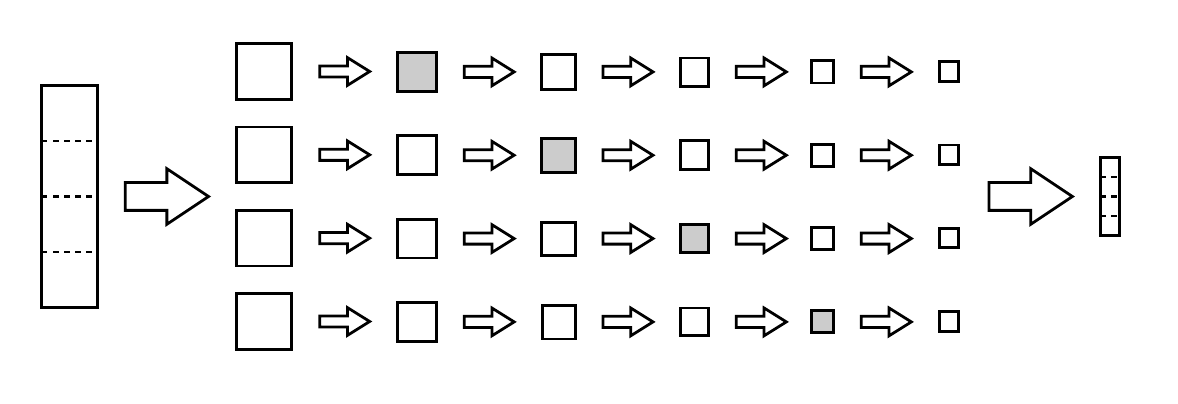}

\caption{Algorithms \algorithm{Remove} and \algorithm{Parallel-Prune} (top),
  and a graphical representation of the case $m=1$ and $p=4$ (bottom).
  At each stage of \algorithm{Prune}, the set $S_j$ is shaded.}
\label{fig:par-prune}
\end{figure}

After \algorithm{Remove} has finished, we replace set $S_i$ by the new (usually smaller) set $S_i'$.
Now, we observe that calling \algorithm{Remove} for sets $S_i$ and $S_j$ can
be performed in parallel to calling it for sets $S_k$ and $S_j$. Thus, in our parallelization
approach, we start by using the first set to remove networks from all other sets in parallel, then
we use the second set to remove networks from the first and all following sets, etc. After all sets
have been used in \algorithm{Remove}, the pruned
set $R^n_k$ can be obtained by merge-by-concatenation of all of
the final sets $S_i$.

The idea of the two phase version of \algorithm{Prune} is formalized
in the algorithm \algorithm{Parallel-Prune}, also detailed in
Figure~\ref{fig:par-prune}. This algorithm can be shown to have
isoefficiency $O(p^2\log^2p)$ using the techniques presented in
\cite{DBLP:journals/ieeecc/GramaGK93}, meaning that if we wanted to use
twice as many processors
maintaining efficiency, 
we would have to increase the problem
size by a factor a little greater than $4$.

In this way, $p$ processing elements can complete the first phase with
$m$ calls to \algorithm{Prune} per processing element.  The second
phase, with a total of $m\times p\times(m\times p-1)$ calls to
\algorithm{Remove}, requires approximately $m^2p^2$ calls
per processing element.  Although the comparisons in
\algorithm{Parallel-Prune} are not made in the same order as in the
original \algorithm{Prune}, experiments show that the total number of
comparisons made is roughly the same, while overhead grows with $m$.
Thus, in order to enhance overall performance, we can focus on
minimizing overhead, i.e., $m$ should be chosen to be minimal. In
other words, $m$ should be $1$ as long as the resulting sets
fit into memory for application of the \algorithm{Prune} algorithm. As
an additional measure to keep overhead low, minimum sizes of $1000$
and $5000$ comparator networks were imposed when splitting up the sets
in \algorithm{Parallel-Generate} and \algorithm{Parallel-Prune},
respectively.

The optimizations described in this section made it possible to
compute the sets $R^9_k$ for $1\leq k\leq 14$ in just over one
week\footnote{More precisely, in 7 days, 17 hours, and 58 minutes.}
using values $p = 288$ and $m=1$ in all runs of
\algorithm{Parallel-Prune}. 
The sizes of the sets $R^9_k$ are shown in Table~\ref{tab:r9sets}.
At this stage, $|R^9_{14}|=914{,}444$, and
we continued our efforts on two alternative paths. 
On one path, we continued to run the generate-and-prune approach to
compute $R^9_{k}$ for $15 \leq k \leq 25$. After five additional days
of computation, we obtained a singleton set for $R^9_{25}$
containing the minimal nine-channel sorting network depicted below.

\vspace{1ex}
\hfill\includegraphics{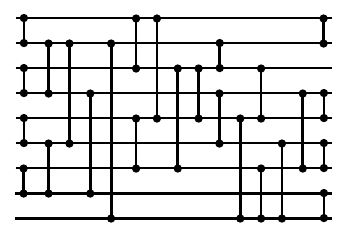}
\hspace*\fill\vspace{2ex}

\begin{table}
  \small
  \[\begin{array}{c|c|c|c|c|c|c|c|c|c|c|c|c|c}
    k & 1 & 2 & 3 & 4 & 5 & 6 & 7 & 8 & 9 & 10 & 11 & 12 & 13 \\ \hline
    |R^9_k| & 1 & 3 & 7 & 20 & 59 & 208 & 807 & 3{,}415 & 14{,}343 & 55{,}991 & 188{,}730 & 490{,}322 & 854{,}638
  \end{array}\]
  \[\begin{array}{c|c|c|c|c|c|c|c|c|c|c|c|c}
    k & 14 & 15 & 16 & 17 & 18 & 19 & 20 & 21 & 22 & 23 & 24 & 25 \\ \hline
    |R^9_k| & 914{,}444 & 607{,}164 & 274{,}212 & 94{,}085 & 25{,}786 & 5{,}699 & 1{,}107 & 250 & 73 & 27 & 8 & 1
  \end{array}\]

  \caption{Sizes of the sets $R^9_k$ for $1\leq k\leq 25$.}
  \label{tab:r9sets}
\end{table}

On the other path, we turned to consider the use of a SAT solver to
encode the search for an optimal-size sorting network on 9
channels. Given the set $R^9_{14}$, this required less than half a day of
computation on 288 threads (instead of 5 days), which is the topic of the next section.

\section{The SAT encoding approach}
\label{sec:step2}

In recent years, Boolean SAT-solving techniques have improved
dramatically, and SAT is currently applied to solve a wide variety of
hard and practical combinatorial problems, often outperforming
dedicated algorithms.
The general idea is to encode a hard problem instance, $\mu$, to a
Boolean formula, $\varphi_\mu$, such that the satisfying assignments
of $\varphi_\mu$ correspond to the solutions of $\mu$. Given such an
encoding, a SAT solver can be applied to solve $\mu$.
Recent attempts to
attack open instances of the optimal-depth sorting network problem,
such as those described in
\cite{DBLP:conf/mbmv/MorgensternS11,DBLP:conf/lata/BundalaZ14},
consider encodings to SAT. However, these encodings do not readily
apply to the optimal-size sorting network problem. In fact, we are not
aware of any previous attempts to encode the optimal-size sorting
network problem in SAT.

The encoding we propose in this paper is of size exponential in the
number of channels,~$n$. This is also the case for all previous SAT
encodings for the optimal-depth sorting network problem. Both of these
problems are naturally expressed in the form $\exists\forall\varphi$
(does there \emph{exist} a network that sorts \emph{all} of its
inputs?), and are easily shown to be in $\Sigma_2^P$.  We expect that,
similar to the problem of circuit minimization,
they are also $\Sigma_2^P$-hard,
although we have not succeeded to prove this. We do
not expect that there exists a polynomial-size encoding to SAT.

\subsection{Encoding the search for a sorting network}
We describe here a SAT encoding of the following decision problem,
which we term the $(n,k)$ sorting network problem: does there exist a
sorting network of size $k$ on $n$ inputs? We introduce this encoding
as a finite domain constraint model such that the encoding to conjunctive
normal form (CNF) of
each constraint in the model is straightforward. At the implementation
level, we apply the BEE compiler~\cite{BEE}, which performs this
encoding together with a range of ``compile-time'' optimizations.

We represent a size $k$ comparator network $\mathtt{Network}$ with $n$
channels as a sequence of the form $\mathtt{Network} =
\tuple{\mathtt{c(I_1,J_1)},\ldots,\mathtt{c(I_k,J_k)}}$ where the
$\mathtt{I_i}$ and $\mathtt{J_i}$ are finite domain integer variables
with domain $[1,n]$ and $\mathtt{I_i<J_i}$ for each $i$. The
conjunction of the following constraints encodes that
$\mathtt{Network}$ is a valid comparator network on $n$ channels.
\[ \mathtt{valid_n(Network)} = \bigwedge_{i=1}^{k} 
          \mathtt{new\_int(I_i,1,n)}\land\mathtt{new\_int(J_i,1,n)}\land
          \mathtt{int\_lt(I_i,J_i)}
\]
A constraint of the form $\mathtt{new\_int(I,1,n)}$ specifies that
$\mathtt{I}$ is the bit-level representation of an integer variable
with domain $[1,n]$. A constraint of the form $\mathtt{int\_lt(I,J)}$
specifies that the integer value represented by $\mathtt{I}$ is less
than that represented by $\mathtt{J}$. Below, we also consider the
constraint $\mathtt{eq(I,i)}$, which specifies that the integer value
represented by $\mathtt{I}$ is equal to the constant $i$.
The specific representation of integers is not important -- any of the
standard integer representations works. In our implementation, we adopt
a unary representation in the order encoding
(see e.g.~\cite{baker,DBLP:conf/cp/BailleuxB03}).

The conjunction of the following constraints encodes the impact of a
single comparator $\mathtt{c(I,J)}$ in terms of the vectors of Boolean
variables $\vec x=\tuple{x_1,\ldots,x_n}$ and $\vec
y=\tuple{y_1,\ldots,y_n}$, representing the values on the $n$ channels
before and after the comparator.
The first conjunction, $\varphi_{I,J}(\vec x,\vec y)$, specifies that
when integer variables $(I,J)$ take the values $(i,j)$, then $y_i =
x_i\land x_j$ and $y_j = x_i\lor x_j$, i.e., the minimum goes to $y_i$ and the
maximum to $y_j$.
The second conjunction, $\psi_{I,J}(\vec x,\vec y)$, specifies that $x_i=y_j$ for
all channels $i$ different from the values $I$ and $J$.
\begin{align*}
\varphi_{I,J}(\vec x,\vec y) &=
       \bigwedge_{1\leq i<j\leq n}
       \left(
           \mathtt{int\_eq(I,i)}\land \mathtt{int\_eq(J,j)} \rightarrow%
           (y_i\leftrightarrow x_i\land x_j) \land 
                     (y_j\leftrightarrow x_i\lor x_j)
       \right)
\\
\psi_{I,J}(\vec x,\vec y) &=
     \bigwedge_{1\leq i\leq n} 
     \left(
                   \neg\mathtt{int\_eq(I,i)} \land
                   \neg\mathtt{int\_eq(J,i)} \rightarrow %
                   x_i\leftrightarrow y_i
     \right)
\end{align*}
The following encodes that $\mathtt{Network} =
\tuple{\mathtt{c(I_1,J_1)},\ldots,\mathtt{c(I_k,J_k)}}$ sorts $\vec
b\in \BB^n$. 
Let $\vec x_0 = \vec b$, $\vec x_k$ be equal to the vector obtained by
sorting $\vec b$, and let $\vec x_1,\ldots,\vec x_{k-1}$ be length $n$
vectors of Boolean variables. Then,
\[ \mathtt{sorts(Network,\mbox{$\vec{b})$}} = \bigwedge_{i=1}^{k} 
       \varphi_{I_i,J_i}(\vec x_{i-1},\vec x_{i}) \land 
       \psi_{I_i,J_i}(\vec x_{i-1},\vec x_{i})
\]
A sorting network with $k$ comparators on $n$ channels must sort all of
its inputs. Hence, a sorting network with $k$ comparators on $n$ channels
exists if and only if the following formula is satisfiable.

\begin{equation}
\label{satEncoding1}
  \mathtt{sorter_n}(\mathtt{Network}) =  \mathtt{valid_n(Network)} \land
    \bigwedge_{{\vec b}\in\BB^n}\mathtt{sorts(Network,\mbox{$\vec b$})}
\end{equation}

Our implementation of the above encoding introduces several additional
optimizations. We list these here briefly, for $\mathtt{Network} =
\tuple{\mathtt{c(I_1,J_1)},\ldots,\mathtt{c(I_k,J_k)}}$.
\begin{itemize}

\item\emph{No redundant neighbors.} For each $1\leq i<k$, we add the
  constraint: $I_i\neq I_{i+1} \lor J_i\neq J_{i+1}$.
\item\emph{Independent comparators in ascending order.} For each $1\leq i<k$, we add the
  constraint: $I_i\neq I_{i+1} \land I_i\neq J_{i+1} \land J_i\neq I_{i+1} \land J_i\neq J_{i+1} \rightarrow
  I_i<I_{i+1}$.
\item\emph{All adjacent comparators.} Following Exercise 5.3.4.35
  of~\cite{Knuth73}, we add the constraint that states that
  all comparators of the form $(i,i+1)$ must be present in every
  standard sorting network.
\item\emph{Only unsorted inputs.} Let $\BB^n_{un}$ denote the
  subset of $\BB^n$ consisting of unsorted sequences. Then it is
  possible to refine the conjunction in Equation~(\ref{satEncoding1})
  replacing $\BB^n$ with the smaller $\BB^n_{un}$. Moreover, observe
  that $|\BB^n_{un}| = 2^n-n-1$, and as noted by Chung and Ravikumar in
  \cite{Chung1990}, this is the size of the smallest test set possible
  in order to determine that $\mathtt{Network}$ is a sorting network.
\end{itemize}

Table~\ref{tab:sat} shows the results obtained with our implementation
of the SAT encoding described above. The left part of the table
concerns the search for sorting networks of optimal size; and the
right part, the ``proof'' that smaller networks do not exist. The
columns labeled ``BEE'' detail the compilation times (in seconds) to
generate the CNF and to perform optimizations prior to SAT
solving. The columns labeled ``SAT'' detail the SAT-solving times (in
seconds) for the satisfiable instances, on the left, and for the
unsatisfiable instances, on the right. The $\infty$ symbol indicates a
time-out: these instances did not terminate even after one week of
computation. We observe that the sizes of these SAT instances, even
those that we cannot solve, are not excessive: all instances contain
less than one million clauses, and less than one quarter of a million
variables.

\begin{table}[hb]
  \centering
\begin{tabular}{|l||r|r|r|r|r||r|r|r|r|r||}
\cline{2-11} \multicolumn{1}{c||}{}
    &\multicolumn{5}{c||}{optimal sorting networks (sat)}      
    &\multicolumn{5}{c||}{smaller networks (unsat)}\\
\hline
$n$ &$k$&
\multicolumn1{c|}{BEE} &
\multicolumn1{c|}{\#clauses} &
\multicolumn1{c|}{\#vars} &
\multicolumn1{c||}{SAT} &
$k$&
\multicolumn1{c|}{BEE} &
\multicolumn1{c|}{\#clauses} &
\multicolumn1{c|}{\#vars} &
\multicolumn1{c||}{SAT} \\
\hline
  4 &5 &  0.18 &  1916&   486& 0.01     &4 & 0.15 &  1480 & 356 & 0.01     \\
  5 &9 &  1.03 & 10159&  2550& 0.03     &8 & 0.90 &  8963 &2221 & 1.27     \\
  6 &12&  4.55 & 35035&  8433& 2.45     &11& 3.99 & 32007 &7657& 242.02    \\
  7 &16& 21.68 &114579& 26803&16.70     &15& 19.04 &107227& 25000&\cinfty\\
  8 &19& 82.93 &321445& 73331&\cinfty   &18& 73.34 &304145& 69221&\cinfty\\
  9 &25&452.55 &977559&219950&\cinfty   &24& 406.67 &937773&210715&\cinfty\\
\hline
\end{tabular}
  \caption{SAT solving for $n$-channel sorting networks with $k$
    comparators: BEE compile times and SAT solving times are in seconds. }
  \label{tab:sat}
\end{table}

The optimal sorting networks for sizes $5$ to $7$ found by this algorithm
are represented below.

\vspace{1ex}\noindent\hfill
\raisebox{-\height/2}{\includegraphics{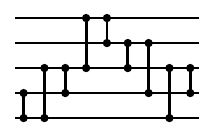}}
\hfill
\raisebox{-\height/2}{\includegraphics{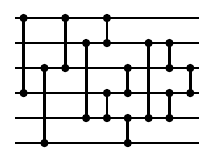}}
\hfill
\raisebox{-\height/2}{\includegraphics{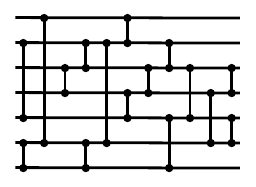}}
\hspace*\fill\vspace{2ex}

\subsection{Searching from a complete set of comparator networks}

Since the methodology presented above does not scale beyond 
$n=6$, we will
now show how to capitalize on the results from Section~\ref{sec:implement}.
Therefore, we focus on the following variant of the previous problem,
which we term the
$(n,k,S)$ sorting network problem: given a (complete) set of comparator
networks, $S$ on $n$ channels, is there a network $C\in S$ that can be
extended to a sorting network of size $k$?

To solve this problem, we consider each element $C\in S$ separately. We
encode the corresponding $(n,k)$ sorting network problem in terms of
$\mathtt{Network} =
\tuple{\mathtt{c(I_1,J_1)},\ldots,\mathtt{c(I_k,J_k)}}$, and we fix the
values of the comparator positions in the prefix of $\mathtt{Network}$
to match the positions of those in $C$.  Even this small difference
turns out to provide one key ingredient to solve the optimal-size
sorting network problem; the other key ingredient is to make
sure that the size of the set $S$ is as small as possible.

With the SAT encoding of Equation~(\ref{satEncoding1}), we are not able to show
that there is no sorting network of size 15 on 7 channels (even given
a week of computation time). Recall Lemma~\ref{lem:complete}, and
consider $n=7$. The set $R^7_3$ consists of $7$ comparator networks
and is complete. So, there exists an optimal-size sorting network on
$7$ channels if and only if there exists one of the form $C;C'$ for
some $C\in R^7_3$.
Solving the $(7,15,R^7_3)$ sorting network problem reveals that there
is no sorting network on 7 channels with 15 comparators. The
computational cost of this proof sums up to approximately 10 minutes
of parallel computation (on 7 cores), or less than 1 hour in total of
sequential computation.

Solving the SAT and UNSAT cases for 8 channels is more involved. Here
we consider $R^8_5$, which is a complete set of comparator networks
with 5 comparators each and consists of 57 elements.
For the UNSAT case, computation requires just under 1.36
hours on 57 cores (the time to complete the slowest instance), or a
total of 33.83 hours on a single core.
For the SAT case, computation requires 0.35 of an hour (on 57 cores),
which is the time until the first satisfiable instance terminates.

There is one further optimization, adopted from
\cite{DBLP:conf/lata/BundalaZ14}, that we consider when encoding the
search for a sorting network that extends a given comparator
network. Consider again Equation~(\ref{satEncoding1}). A sorting
network must sort all of its (unsorted) inputs and hence the
conjunction of all $\vec b\in\BB^n$ (or the smaller set
$\BB^n_{un}$). However, if we consider any specific subset of
$B\subseteq\BB^n$ and show that there is no comparator network that
sorts the elements of $B$, then surely there is also no comparator
network that sorts the elements of $\BB^n$.
In particular, we consider the set $\BB^n_s$, which we call the set of
\emph{windows of size $s$}, of all unsorted length $n$ binary
sequences of the form $0^{\ell_1}.w.1^{\ell_2}$ such that
$\ell_1+\ell_2 = s$.  If the encoding of Equation~(\ref{satEncoding1})
is unsatisfiable when replacing $\BB^n$ with $\BB^n_s$, then it is
unsatisfiable also in its original form. Solving the UNSAT case for
$8$ channels and 18 comparators using this optimization reduces the total
solving time from 33.83 hours to 27.52 hours. From the 57 instances
that need be shown unsatisfiable, 50 are found so with $s=3$; a
further 4 with $s=2$; and the remaining 3 with $s=1$.

To solve the optimal-size sorting network problem for $n=9$ channels,
we consider the 
($9,24,R^9_{14}$) sorting network problem, where $R^9_{14}$ is the
complete set of $914{,}444$ comparator networks obtained using the
technique described in Section~\ref{sec:implement}. We show that each
of the corresponding propositional formulae is unsatisfiable, implying
that there is no extension of an element of  $R^9_{14}$ to a 24
comparator sorting network. 
The average solving time (per instance) is 4.09 seconds for
compilation and 7.83 seconds for the SAT solver.  The total solving
time for all instances (compilation and SAT solving) is 3028 hours.
There is an additional overhead of 333 hours for using the windows
optimization (the cost of resolving with a smaller window when an
instance is satisfiable). Running 288 
threads on 144 cores requires
just under 12 hours of computation.
From the 914{,}444 instances, 675{,}736 (74\%) were found
unsatisfiable using a window of size 4, 233{,}400 (25\%) were found
unsatisfiable using a window of size 3, 4{,}979 (less than 1\%) were
found unsatisfiable using a window of size 2, and the remaining 329
(less than 1\%) were found unsatisfiable using a window of size 1.

\section{Proving optimality of 25 comparators for 9 inputs}
\label{sec:discussion}

In Sections~\ref{sec:implement} and~\ref{sec:step2} we provide two
alternative proofs that $S(9)=25$, both of which rely on first computing the
set $R^9_{14}$ (1 week of computation).
For the first alternative, using the techniques of
Section~\ref{sec:implement} we apply the generate-and-prune approach
continuing from $R^9_{14}$ until termination with $|R^9_{25}|=1$ (an
additional 5 days of computation). 
For the second alternative, using the techniques of
Section~\ref{sec:step2}, we apply a SAT solver to solve the
$(9,24,R^9_{14})$ sorting network problem, showing that no
element of $R^9_{14}$ can be extended to a $24$-comparator sorting network
(less than half a day of computation).

For both alternatives, the implementation relies on a Prolog program to
compute the sets $R^9_{k}$.  The second alternative involves also a
Prolog implementation of the SAT encoding, the BEE constraint
compiler, and the state-of-the-art SAT solver CryptoMiniSAT~\cite{Crypto}.
We use SWI-Prolog 6.6.1.

While it is reassuring to have two alternative proofs, they both share
the computation of $R^9_{14}$. Although we have proved all of the
mathematical claims underlying the design of the proof algorithm
and have carefully checked the correctness of the Prolog implementation,
there is always the potential for errors in computer programs.  The
objective of this section is to provide further confidence in the
correctness of our results.

One of the key aspects of computer-assisted proofs is guaranteeing
their validity.  Barendregt and Wiedijk~\cite{Barendregt2005}
introduced the \emph{de Bruijn criterion}: every computer-assisted
proof should be verifiable by an independent small program (a
``verifier'').  In this section, we summarize how our approach meets
this criterion.

Verifiers for SAT encodings are, in our case, more complex, as the
instances we need to verify are all unsatisfiable. While satisfiable
instances have concrete assignments as their witnesses, for
unsatisfiable instances we would have to verify $914{,}444$ (minimal)
unsatisfiable cores. Hence, we focus our validity argument on the
generate-and-prune approach, which involves two critical points. We must
guarantee that: (1) when extending a network with $k$ comparators to
one with $k+1$ comparators, all extensions are considered, and (2)
when eliminating a network, this is sound.  

In order to verify our result independently from the Prolog
implementation, the code is augmented to produce a log file during
execution. 
We then verify that the information in this file provides a sound and
complete basis for the reconstruction of our proof that there is no
9-channel sorting network consisting of 24 comparators. To this end,
an independent Java implementation of the generate-and-prune algorithm
is provided, with one main important difference to the Prolog
implementation: it performs no search, and is aware only of the
information available in the log file. 

The log file contains lines of the form
``$\mathtt{killed(C_1,C_2,\pi)}$'', specifying that network $C_1$ is
pruned because it is subsumed by a network $C_2$ with permutation
$\pi$ (namely, that $C_2 \leq_\pi C_1$).  Such lines are introduced
both when extending a network with a redundant comparator (here the
permutation is an identity), as well as when pruning.

The verifier reconstructs the computation of all of the sets $R^9_k$,
starting from $R^9_0$ which consists of the empty comparator
network. When extending $R^9_k$ to $R^9_{k+1}$ it first performs a
naive extension to $N^9_{k+1}$, adding all comparators in all possible
positions, and then computes $R^9_{k+1}$ using the log file
only. Namely, for each row of the form $\mathtt{killed(C_1,C_2,\pi)}$,
we first verify that indeed $\pi(\outputs(C_2))\subseteq\outputs(C_1)$,
and then remove $C_1$ from $N^9_{k+1}$. 
By soundness, we mean that whenever a network is eliminated, we have
verified that the logged permutation $\pi$ is indeed a witness to its
redundancy. 
By completeness, we mean that after pruning we have a complete set of
comparator networks.  In order to ensure completeness, we additionally
verify that the logged subsumption information is acyclic. Otherwise,
it would be possible, for example, that there were two networks, $C_1$ and
$C_2$ such that both $C_1 \leq_{\pi_1} C_2$ and $C_2 \leq_{\pi_2} C_1$,
and that both were eliminated.

Using this tool, we verified the computer-assisted proof of $n=7$ in
$4$ seconds, the one for $n=8$ in $2$ minutes, and the one for $n=9$
in just over $6$ hours of computational time. The logs and the Java
verifier are available from: \url{http://imada.sdu.dk/~petersk/sn/}

\section{Conclusions}
\label{sec:concl}

We have shown that $S(9) = 25$, i.e., the minimal number of
comparators needed to sort nine inputs is 25. This closes the smallest
open instance of the optimal-size problem for sorting networks,
which was open since~1964. As a
corollary, given the result from~\cite{Knuth73} that states that
$S(10)\leq 29$, and applying the inequality $S(n+1) \geq S(n) +
\lceil\log_2n\rceil$ from \cite{voorhis72}, we now also know that
$S(10) = 29$.

\section*{Acknowledgement}
We thank Carsten Fuhs and Donald E. Knuth for their constructive comments on draft versions of this paper.

\bibliographystyle{abbrv}
\bibliography{size}

\end{document}